\newtheorem{theorem}{Theorem}
\newenvironment{proof}{\textit{Proof.}}{\hfill$\diamond$}
\newcommand{\ignore}[1]{}
\newcommand{\realsp}{\ensuremath{\mathbb{R}^+_0}}
\newcommand{\naturals}{\ensuremath{\mathbb{N}}}
\newcommand{\mktuple}[1]{\ensuremath{\langle #1 \rangle}}
\newcommand{\zero}{\textbf{0}}
\newcommand{\atfuture}{at next}
\newcommand{\atpast}{at last}
\newcommand{\atF}{\ensuremath{@F}}
\newcommand{\atP}{\ensuremath{@P}}
\newcommand{\atsF}{\ensuremath{@\tilde{F}}}
\newcommand{\atsP}{\ensuremath{@\tilde{P}}}
\newcommand{\default}{\ensuremath{\mathit{def}}}
\newcommand{\ite}{\ensuremath{\mathit{ite}}}
\title{Linear-time Temporal Logic\\ with Event Freezing
  Functions%
\thanks{This work has received funding from the European Union's Horizon 2020
research and innovation programme under the Grant
Agreement No. 700665 (project CITADEL).}
}
\author{Stefano Tonetta
\institute{FBK-irst}
\email{tonettas@fbk.eu}}
\begin{document}

\maketitle
\begin{abstract}
Formal properties represent a cornerstone of the system-correctness
proofs based on formal verification techniques such as model checking.
Formalizing requirements into temporal properties may be very complex
and error prone, due not only to the ambiguity of the textual
requirements but also to the complexity of the formal
language. Finding a property specification language that balances
simplicity, expressiveness, and tool support remains an open
problem in many real-world contexts.

In this paper, we propose a new temporal logic, which extends
First-Order Linear-time Temporal Logic with Past adding two operators
``\atfuture'' and ``\atpast'', which take in input a term and a
formula and represent the value of the term at the next state in the
future or last state in the past in which the formula holds. We
consider different models of time (including discrete, dense, and
super-dense time) and Satisfiability Modulo Theories (SMT) of the
first-order formulas. The ``\atfuture'' and ``\atpast'' functions can
be seen as a generalization of Event-Clock operators and can encode
some Metric Temporal operators also with counting. They are useful to
formalize properties of component-based models because they allow to
express constraints on the data exchanged with messages at different
instants of time. We provide a simple encoding into equisatisfiable
formulas without the extra functional symbols. We implement a
prototype tool support based on SMT-based model checking.
\end{abstract}

\section{Introduction}

The specification of properties is a fundamental step in the formal
verification process. System requirements must be captured by formal
properties, typically using logic formulas. However, this is often a
complex activity and may become a blocking factor for an industrial
adoption of the formal techniques. The informal requirements are quite
ambiguous but also the complexity of the target logic may be the
source of errors in the specification. Finding a property
specification language that balances simplicity, expressiveness, and
analysis tool support remains an open problem in many real-world
contexts.

One of the most popular logics used in computer science to specify
properties for formal verification is Linear-time Temporal Logic (LTL)
\cite{Pnu77}. The model of time is typically discrete and models are
discrete, linear sequences of states. We consider First-Order LTL
\cite{MP92} with future as well as past
operators~\cite{LichtensteinPZ85}. Thus, the system state is described
by individual variables and first-order functions/predicates can
describe their relationship. In the spirit of Satisfiability Modulo
Theories (SMT)~\cite{BarrettSST09}, the formulas are interpreted
modulo a background first-order theory as in~\cite{GhilardiNRZ07}
(here, we restrict to a quantifier-free fragment with all signature's
symbols rigid). Efficient SMT-based model checking techniques can be
used to verify temporal properties on systems described with first-order
formulas (see, e.g, \cite{CimattiGMT14,DanielCGTM16}).

In the case of real-time systems, LTL is interpreted over a dense
model of time or super-dense (dense time with possible sequences of
instantaneous events, as needed for example for asynchronous real-time
systems).  When considering real models of time, it becomes natural to
have constraints on the time elapsing between different events.
Therefore, LTL has been extended either with clocks/freezing operators
as in TPTL~\cite{AlurH93} or with metric operators as in
\cite{Koymans90,AFH96,RaskinS99}. Again, these extensions can be
combined with first-order logic (e.g., to represent message
passing~\cite{Koymans92} or for monitoring specification
in~\cite{BasinKlaedtkeMuller}).

When a system or component is seen as a black box, the properties must
be specified in terms of the observable variables or messages
exchanged with the system environment. This is for example the case of
properties of monitors (which trigger alarms based on some condition
on the observed variables/messages) or contract-based specifications
(which formalize the assumptions/guarantees of components independently
of the implementation). In these cases, the properties must capture
the relationship between the observable variables at different points
of time, without referring to internal variables that store the
corresponding values. It is therefore necessary to have suitable
mechanisms to refer to the value of variables at different points of
time. Instead of enriching the specification language with registers
as in register automata~\cite{DemriL09} to explicitly store the value
of variables in an operational-style specification, we adopt a more
declarative style with functions that directly return the value of
variables at the next or last state in which a formula will be/was
true.

More specifically, we extend the quantifier-free fragment of First-Order
Linear-Time Temporal Logic with Past operators adding ``\atfuture''
$u\atsF(\phi)$ and ``\atpast'' $u\atsP(\phi)$ functional symbols, which
are used to represent the value of a term $u$ at the next state in the
future or at the last state in the past in which a formula $\phi$ holds. For
example, the formula $G(alarm\leftrightarrow
x\atsP(read)=x\atsP^2(read))$ says that $alarm$ is true iff in the
last two points in which $read$ was true the variable $x$ had the same
value.
We consider different models of time, including discrete,
dense, and super-dense time. In the dense time setting, the definition
has to take into account that a minimum time point may not exist
because a formula may be true on open intervals.
The ``\atfuture'' and ``\atpast'' functions can be seen as a
generalization of Event-Clock Temporal Logic (ECTL)
operators~\cite{RaskinS97,HenzingerRS98, RaskinS99} (which, on turn,
are the logical counterpart of event clocks~\cite{AlurFH99}) and can
encode some Metric Temporal Logic (MTL) operators~\cite{Koymans90}
also with counting~\cite{HirshfeldR06,OrtizLS10}. They are useful to
formalize properties of component-based models because
they allow us to express constraints on the data exchanged with
messages at different instants of time. We provide a simple encoding
of the formulas with these extra functional symbols into
equisatisfiable formulas without them. We implemented a prototype tool
support based on SMT-based model checking.

The natural alternative to the logic we proposed would be to use
registers and freezing quantifiers as in \cite{DemriL09} and
TPTL. Despite freezing quantifiers provide a higher expressiveness
(also with respect to MTL~\cite{BCM10}), they are not so common
in industrial applications (at least compared to LTL and MTL), either
because they are less intuitive to use or they lack tool support.

The main contributions of the paper are the following. First, we
identify an extension of LTL that can express interesting properties
relating variables at different points of time. Second, we define the
new operators in a very rich setting that includes first-order
constraints, past operators, dense and super-dense semantics; this
gives also a uniform treatment of LTL satisfiability modulo theories
in the case of real time models. Third, we provide a prototype tool
support that effectively proves interesting properties, while many
logics in the real-time setting lack of tool support.

The rest of the paper is organized as follows: Section~\ref{sec:bg}
introduces the considered time models, LTL satisfiability modulo
theories, and its extension with metric operators;
Section~\ref{sec:main} defines the extension with the new event
freezing functions; Section~\ref{sec:sat} describes the satisfiability
procedure; Section~\ref{sec:eval} presents some preliminary
experimental results; finally, Section~\ref{sec:conc} concludes the
paper and draws directions for future work.

\section{Background}
\label{sec:bg}

\subsection{Time models}

\realsp is the set of non-negative real numbers. A time interval is a
convex subset of \realsp. The left endpoint of an interval $I$ is
denoted by $l(I)$, while the right endpoint by $r(I)$. Two intervals
$I$ and $I'$ are almost adjacent iff $r(I)=l(I)$ (so they may overlap
in at most one point). A singular interval is an interval in the form
$[a,a]$ for some $a\in\realsp$. A time interval sequence is a sequence
$I_0,I_1,I_2,\ldots$ of time intervals such that, for all $i\geq 0$,
$I_i$ and $I_{i+1}$ are almost adjacent and $\bigcup_{i\geq
  0}I_i=\realsp$.

We consider different models of time~\cite{AH91,AlurH93}. A time model is a
structure $\tau=\mktuple{T,<,\zero,v}$ with a domain $T$, a total
order $<$ over $T$, a minimum element $\zero\in T$, and a function
$v:T\rightarrow \realsp$ that represents the real time of a time point
in $T$. The $v$ function is used instead of a distance (e.g., as in
\cite{Koymans90}) to treat the weakly-monotonic case in a more uniform
way.
A \emph{time point} is an element of $T$. In particular, we
consider the following models:
\begin{itemize}
\item
discrete time models where $T=\naturals$, $\zero$ and $<$ are the
standard zero and order over natural numbers, $v(0)=0$ and
$v(0),v(1),v(2),\ldots$ is a non-decreasing divergent sequence (this
is also called the pointwise semantics; we use these models also for
discrete-time LTL ignoring these real-time timestamps);
\item
dense (strictly-monotonic) time model where $T=\realsp$, $\zero$ and
  $<$ are the standard zero and order over the real numbers,
  and $v$ is the identity function;
\item
super-dense (weakly-monotonic) time models where 1)
$T\subset\naturals\times\realsp$ such that the sequence of sets
$I_0,I_1,I_2,\ldots$ where, for all $i\geq 0$, the set $I_i:=\{t\mid
\mktuple{i,t}\in T\}$, is a time interval sequence (thus subsequent
intervals can overlap in at most one point), 2)
$\mktuple{i,t}<\mktuple{i',t'}$ iff $i<i'$ or $i=i'$ and $t<t'$, 3)
$\zero=\mktuple{0,0}\in\naturals\times\realsp$, and 4)
$v(\mktuple{i,t})=t$.
\end{itemize}

\subsection{First-Order Linear-time Temporal Logic}

We consider First-Order Linear-time Temporal Logic with Past
Operators, which we refer to for simplicity as LTL.

Given a first-order signature $\Sigma$ and a set $V$ of variables, we
define the syntax of $\Sigma$-formulas as follows:
\begin{align*}
&\phi:=p(u,\ldots,u)\mid \phi\wedge\phi\mid \neg\phi\mid \phi
  \tilde{U}\phi\mid \phi \tilde{S}\phi\\
&u:=c\mid x\mid f(u,\ldots,u)
\end{align*}
\noindent
where $p$ is a predicate symbol of $\Sigma$, $u$ is a term, $f$ is
a functional symbol of $\Sigma$, $c$ is a constant symbol of $\Sigma$,
and $x$ is a variable in $V$.

$\Sigma$-formulas are interpreted by a first-order structure
interpreting the symbols in $\Sigma$ and assignments to variables that
vary along time.  More specifically, a state $s=\langle M,\mu\rangle$
is given by a first-order structure $M$ and an assignment $\mu$ of
variables of $V$ into the domain of $M$. Given a state $s=\langle
M,\mu\rangle$ and a symbol $c$ of $\Sigma$ or variable $x\in V$ we use
$s(c)$ to denote the interpretation of $c$ in $M$ and $s(x)$ to denote
the value $\mu(x)$ assigned by $\mu$ to $x$.
Given $M$, let $V^M$ be the set of states with first-order structure
$M$. A trace $\sigma=\mktuple{M,\tau,\overline{\mu}}$ is given by a
first-order structure $M$, a time model $\tau$, and a mapping
$\overline{\mu}$ from the domain of $\tau$ into $V^M$. Given a trace
$\sigma=\mktuple{M,\tau,\overline{\mu}}$ and $t\in \tau$, we denote by
$\sigma(t)$ the state $\mktuple{M,\overline{\mu}(t)}$.

We assume to be given a $\Sigma$ first-order theory
$\mathcal{T}$. Given a $\Sigma$ first-order structure $M$, an
assignment $\mu$ to variables of $V$, and a $\Sigma$ first-order
formula $\phi$ over $V$, we use the standard notion of
$\mktuple{M,\mu}\models_\mathcal{T} \phi$. In the rest of the paper,
we omit the first-order signature $\Sigma$ and theory $\mathcal{T}$
for simplicity.

In our definition of trace, the first-order structure $M$ is shared by
all time points, meaning that the interpretation of the symbols in the
signature $\Sigma$ is rigid, does not vary with time. However, note
that the interpretation of symbols may not be ``fixed'' by the
background theory. These``uninterpreted'' symbols are also called
\emph{parameters}. For example, the signature $\Sigma$ can include the
symbols of the theory of reals (including the constants $0$ and $1$)
and an additional constant symbol $p$, whose value is not determined
by the theory but does not vary with time (thus, $p$ is a parameter).

Given a trace $\sigma=\mktuple{M,\tau,\overline{\mu}}$, a time point
$t$ of $\tau$, and a $\Sigma$ formula $\phi$, we define
$\sigma,t\models \phi$ recursively on the structure of $\phi$.
\begin{align*}
&\sigma,t\models p\text{ iff } \sigma(t)\models p\\
&\sigma,t\models \phi_1\wedge\phi_2\text{ iff }\sigma,t\models\phi_1
\text{ and }\sigma,t\models\phi_2\\
&\sigma,t\models\neg\phi\text{ iff }\sigma,t\not\models\phi\\
&\sigma,t\models\phi_1 \tilde{U}\phi_2\text{ iff there exists }
t'>t, \sigma,t'\models\phi_2\text{ and for all }t'',
t<t''<t',\sigma,t''\models\phi_1\\
&\sigma,t\models\phi_1 \tilde{S}\phi_2\text{ iff there exists }
t'<t, \sigma,t'\models\phi_2\text{ and for all
}t'', t'<t''<t,\sigma,t''\models\phi_1
\end{align*}
\noindent
Note that we are using the strict version of the ``until'' and
``since'' operators, where both arguments are required to hold in
points strictly greater or less than the current time.

Finally, $\sigma\models\phi$ iff $\sigma,\zero\models\phi$. We say
that $\phi$ is satisfiable iff there exists $\sigma$ such that
$\sigma\models\phi$. We say that $\phi$ is valid iff, for all $\sigma$,
$\sigma\models\phi$.

We use the following standard abbreviations:
\begin{align*}
&\phi_1 \vee\phi_2:=\neg(\neg\phi_1\wedge\neg\phi_2) &
&\top:=p\vee\neg p\\
&\bot:=\neg \top &
&\phi_1U\phi_2:=\phi_2\vee (\phi_1\wedge \phi_1\tilde{U}\phi_2)\\
&F\phi:=\top U\phi &
&G\phi:=\neg(F\neg\phi)\\
&\phi_1S\phi_2:=\phi_2\vee (\phi_1\wedge \phi_1\tilde{S}\phi_2)&
&P\phi:=\top S\phi\\
&H\phi:=\neg(P\neg\phi)
\end{align*}

As usual in many works on real-time temporal logics (e.g.,
~\cite{AFH96,Rab-LogComp98}), we assume the ``finite variability'' of
traces, i.e., that the evaluation of predicates by a trace changes
from true to false or vice versa only finitely often in any finite
interval of time. This can be lifted to temporal formulas in the sense
that temporal operators preserve the finite variability property (as
proved for example in \cite{AFH96}). Formally, we say that a trace
$\sigma$ is \emph{fine} for $\phi$ in a time interval $I$ iff for all
$t,t'\in I$, $\sigma,t\models \phi$ iff $\sigma,t'\models \phi$. A
trace $\sigma$ has the \emph{finite variability} property iff for
every formula $\phi$ there exists a sequence of points
${t}_0,{t}_1,t_2\ldots$ of $\sigma$ such that $\sigma$ is \emph{fine}
for $\phi$ in every interval $(t_i,t_{i+1})$, for $i\geq 0$.
In the following, we assume that traces have the
finite variability property.

\subsection{Next Operator and Function}
\label{sec:next}

Since $\tilde{U}$ is the strict version of the ``until'' operator, we
can write the standard $X$ as abbreviation:
\begin{align*}
X\phi&:=\bot \tilde{U}\phi
\end{align*}

$X$ is well defined in the different time models, also in the case of
dense or super-dense time. In the case of weakly-monotonic time,
$X\phi$ can be true only on a discrete step (i.e., in $\mktuple{n,t}$
if $\mktuple{n+1,t}$ is also in $T$). In the case of
strictly-monotonic time, $X\phi$ is always false.

With the strict until, we can also define a continuous counterpart of
the $X$ operator:
\begin{align*}
\tilde{X}\phi&:=\phi \tilde{U}\top\wedge \neg X\top
\end{align*}
\noindent
Note that $X\top$ is true in all and only in discrete steps. Thus,
$\tilde{X}\phi$ is always false in the case of discrete time, while in
the case of dense time it is true in the time points with a right
neighborhood satisfying $\phi$. In the super-dense time case,
$\tilde{X}\phi$ is false in the discrete steps, while in other time
points it is true if a right neighborhood satisfies $\phi$.  Note that
this is a variant of the more standard $\phi \tilde{U}\top$ formula,
which has been studied for example in \cite{FR07}. However, it was
considered only in the dense time case. Here, we added $\neg X\top$,
because it will be more convenient in the super-dense time case.

Similarly, we define the ``yesterday'' operators
$Y\phi:=\bot\tilde{S}\phi$ and $\tilde{Y}\phi:=\phi\tilde{S}\top\wedge
\neg Y\top$. We also define the weaker version of ``yesterday'' that
is true in the initial state:
$Z\phi:=(Y\top\vee\tilde{Y}\top)\rightarrow Y\phi$ and
$\tilde{Z}\phi:=(Y\top\vee\tilde{Y}\top)\rightarrow \tilde{Y}\phi$.

In the discrete-time setting, we often use also the functional
counterpart of $X$, here denoted by $next$~\cite{MP92}. Given a term
$u$, the interpretation of $next(u)$ in a trace $\sigma$ at the time
point $t$ is equal to the value of $u$ assigned by $\sigma$ at the
time point $t+1$. ``next'' does not typically have a counterpart in
the dense time case. Let LTL-next be the extension of LTL with
the $next$ function (with discrete time).

\subsection{Metric Temporal Operators}

In this section, we define some extensions of LTL that use metric
operators to constrain the time interval between two or more points.
We give a general version in the first-order setting that include also
weakly-monotonic time and parametric intervals.

Metric Temporal Logic (MTL) formulas are built with the following
grammar:
\begin{align*}
&\phi:=p(u,\ldots,u)\mid \phi\wedge\phi\mid \neg\phi\mid \phi
  \tilde{U}_I\phi\mid \phi \tilde{S}_I\phi\\
&I:=[cu,cu]\mid (cu,cu]\mid [cu,cu)\mid (cu,cu)\mid [cu,\infty)\mid (cu,\infty)\\
&cu:=c\mid f(cu,\ldots,cu)
\end{align*}
where the terms $u$ are defined as before and $cu$ are terms that do
not contain variables. Thus, the bounds of intervals used in MTL (as
well in the other logics defined below) are rigid and may contain
parameters. We assume here that the background first-order theory
contains the theory of reals and that the terms $cu$ have real type.

The abbreviations $\tilde{F}_I, \tilde{G}_I, \tilde{P}_I, \tilde{H}_I$
and their non-strict versions are defined in the usual way.
Moreover, for all logics defined in this section, we abbreviate the intervals
$[0,a]$, $[0,a)$, $[a,\infty)$, $(a,\infty)$, $[a,a]$, by respectively
    $\leq a$, $<a$, $\geq a$, $>a$, $=a$. Thus, for example,
    $\tilde{F}_{=p}b$ is an abbreviation of $\tilde{F}_{[p,p]}b$.

Let $\sigma=\mktuple{M,\tau,\overline{\mu}}$. We give the semantics
just for the metric operators:
\begin{align*}
&\sigma,t\models\phi_1 \tilde{U}_I\phi_2\text{ iff there exists } t'>t,
v(t')-v(t)\in M(I),\sigma,t'\models\phi_2\text{ and for all }t'',
t<t''<t',\sigma,t''\models\phi_1\\
&\sigma,t\models\phi_1 \tilde{S}_I\phi_2\text{ iff there exists } t'>t,
v(t)-v(t')\in M(I),\sigma,t'\models\phi_2\text{ and for all }t'',
t'<t''<t,\sigma,t''\models\phi_1
\end{align*}
\noindent where $M(I)$ is the set obtained from $I$ by substituting
the terms at the endpoints with their interpretation (thus it may be
also an empty set).

MTL$_0^\infty$ is the subset of MTL where the intervals in
metric operators are in the form $[0,a]$, $(0,a]$, $[0,a)$, $(0,a)$,
        $[a,\infty)$, $(a,\infty)$.

Event-Clock Temporal Logic (ECTL) is instead defined with the
following grammar:
\begin{align*}
&\phi:=p(u,\ldots,u)\mid \phi\wedge\phi\mid \neg\phi\mid \phi
  \tilde{U}\phi \mid \phi\tilde{S}\phi \mid \rhd_I\phi\mid
  \lhd_I\phi
\end{align*}
\noindent where $u$ and $I$ are defined as above.

We just give the semantics for the new symbols:
\begin{align*}
&\sigma,t\models\rhd_I\phi\text{ iff there exists } t'>t,
v(t')-v(t)\in M(I),\sigma,t'\models\phi\text{ and for all }t'',
t<t''<t',\sigma,t''\not\models\phi\\
&\sigma,t\models\lhd_I\phi\text{ iff there exists } t'>t,
v(t)-v(t')\in M(I),\sigma,t'\models\phi\text{ and for all }t'',
t'<t''<t,\sigma,t''\not\models\phi
\end{align*}

Finally, we define the Temporal Logic with Counting (TLC) with the
following grammar:
\begin{align*}
&\phi:=p(u,\ldots,u)\mid \phi\wedge\phi\mid \neg\phi\mid \phi
  \tilde{U}\phi \mid \phi\tilde{S}\phi \mid \overrightarrow{C}^k_{<cu}\phi\mid
  \overleftarrow{C}^k_{<cu}\phi
\end{align*}
\noindent where $u$ and $cu$ are defined as above.

We just give the semantics for the new symbols:
\begin{align*}
\sigma,t\models \overrightarrow{C}^k_{<cu}(\phi)&\text{ iff there exist }
t_1,\ldots,t_k, t<t_1<t_2<\ldots <t_k, v(t_k)-v(t)< M(cu) \\&\text{ such
  that for all } i\in [1,k], \sigma,t_i\models\phi\\
\sigma,t\models \overleftarrow{C}^k_{<cu}(\phi)&\text{ iff there exist }
t_1,\ldots,t_k, t_k<t_{k-1}<\ldots <t_1<t, v(t)-v(t_k)< M(cu) \\&\text{ such
  that for all } i\in [1,k], \sigma,t_i\models\phi
\end{align*}

\section{LTL with Event Freezing Functions}
\label{sec:main}

\subsection{Until next occurrence}

Before introducing the new operators, we observe some subtleties of
the dense-time semantics. In the discrete-time setting, $F\phi$ and
$(\neg\phi) U\phi$ are equivalent. In other words, if $\phi$ is true in
the future, there exists a first point in which it is true, while
$\phi$ is false in all preceding points. This is not the case in the
dense-time setting, since for example the third trace of
Figure~\ref{fig:atF} satisfies $F\phi$ but not $(\neg\phi) U\phi$: for
every time in which $\phi$ holds, there exists a left open interval
in which $\phi$ holds as well.

We can instead use another variant of the until operator defined as:
$$\phi_1U_C\phi_2:=\phi_1U(\phi_2\vee (\phi_1\wedge\tilde{X}\phi_2))$$

Thus, with $U_C$ we are requiring that $\phi_2$ holds in a point or in
every point of a right interval. In this case, we are guaranteed that
there exists a minimum point that satisfies such condition. In fact,
since we are assuming finite variability, $F\phi$ is equivalent to
$(\neg\phi) U_C\phi=(\neg \phi) U (\phi\vee \tilde{X}\phi)$. In the next
sections, we will use this condition to characterize the next point in
the future that satisfies $\phi$. In particular, when we say ``the
next point in the future in which $\phi$ holds'', we actually mean
$\phi$ holds in that point or in a right left-open interval (see also
Figure~\ref{fig:atF}).  Similarly, for the past case.

\begin{figure}[t]
\begin{center}
\includegraphics[scale=0.35]{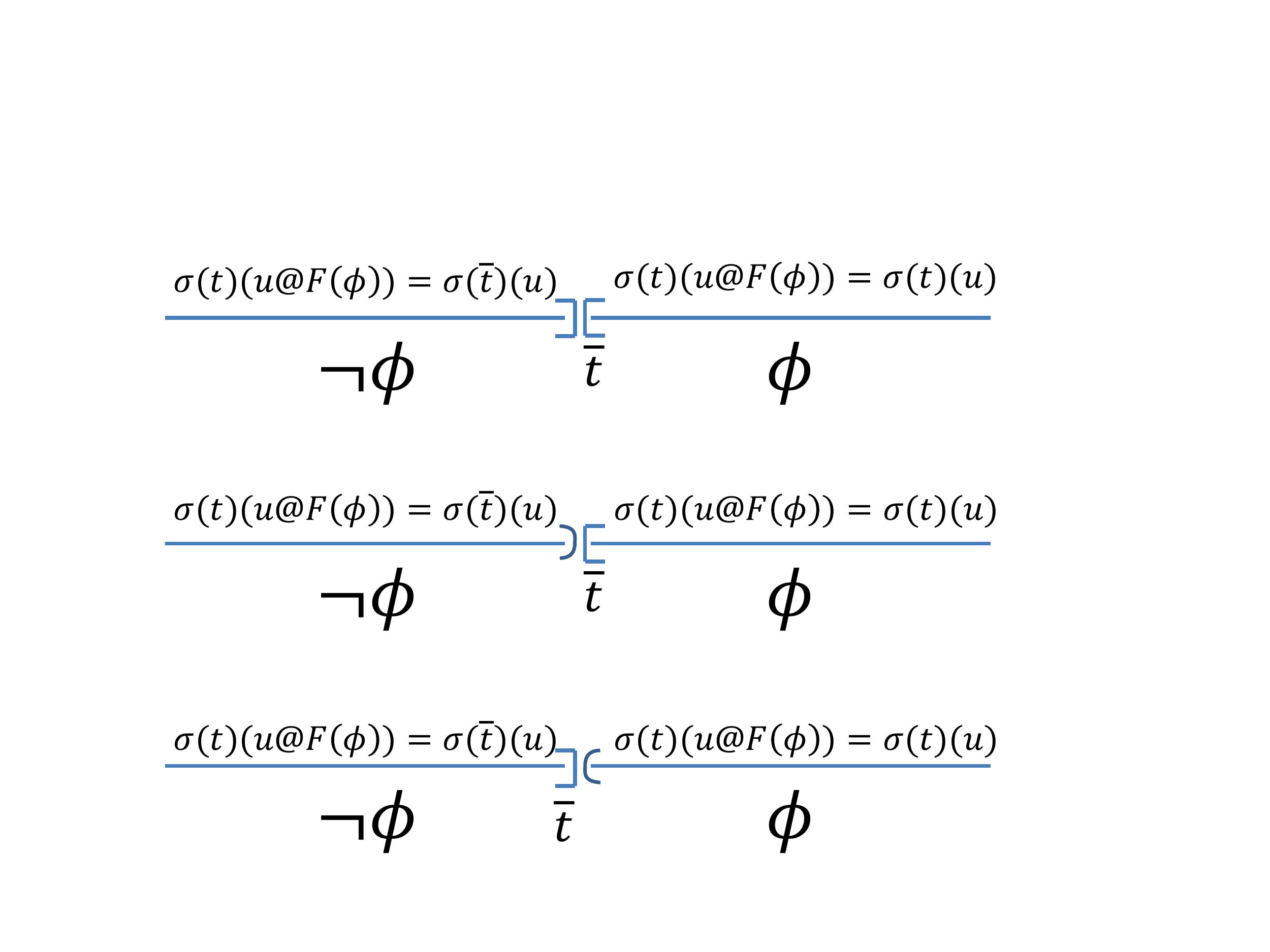}
\caption{Graphical view of different cases in which $\phi$ holds in
  the future. $\overline{t}$ represents ``the next point in the future
  in which $\phi$ holds''.}
\vspace{-3mm}
\label{fig:atF}
\end{center}
\end{figure}

Note that this is related to the issue of $U$ in the dense time
setting raised first by Bouajjani and Lakhnech in \cite{BouajjaniL95}
and later by Raskin and Schobbens in \cite{RaskinS97}, namely that
$\phi_1U\phi_2$ is satisfied only if the time interval in which
$\phi_2$ holds is left-closed. In \cite{BouajjaniL95}, this is solved
by considering $(\phi_1\vee\phi_2)U\phi_2$. However, this does not
solve our issue of characterizing the first point in which $\phi_2$
holds. In \cite{RaskinS97}, the issue was solved at the semantic level
by defining the $U$ on timed state sequence that are fine for the
subformulas and quantifying over the time intervals of the sequence
instead of over the points of the time domain. We instead chose a more
classical approach to define the semantics which seems to clarify
better what we mean for ``the next point in which $\phi$ holds''. This
is more similar to the semantics defined in \cite{HenzingerRS98} for
event clocks in Event-Clock Timed Automata and for the corresponding
quantifiers in the equally expressive monadic logic. However, in
\cite{HenzingerRS98}, a nonstandard real number is used in case $\phi$
holds in a left-open interval.

\subsection{Event Freezing Functions}

We extend the logic with two binary operators, ``\atfuture''
$u\atsF(\phi)$ and ``\atpast'' $u\atsP(\phi)$,
which take in input a term $u$ and a formula $\psi$ and represent the
value of $u$ at the next point in the future, respectively at the last point
in the past, in which $\psi$ holds. If such point does not exist we
consider a default value represented by a constant
$\default_{u\atsF(\psi)}$ or $\default_{u\atsP(\psi)}$. As in SMT, we also use
an if-then-else operator, extended to the temporal case.

The set of LTL with Event Freezing Functions (LTL-EF) formulas is therefore
defined as follows:
\begin{align*}
&\phi:=p(u,\ldots,u)\mid \phi\wedge\phi\mid \neg\phi\mid \phi
  \tilde{U}\phi\mid \phi \tilde{S}\phi\\
&u:=c\mid x\mid f(u,\ldots,u)\mid u\atsF(\phi) \mid u\atsP(\phi) \mid
  \ite(\phi,u,u)
\end{align*}

A formula $\phi$ is interpreted on a trace
$\sigma=\mktuple{M,\tau,\overline{\mu}}$, where $M$ is a first-order
  structure over the signature extended with the constant symbols
  $\default_u$ for every event freezing term $u$ in $\phi$. The semantics
  of LTL is thus extended as follows:
\begin{itemize}
\item
$\sigma(t)(u\atsF(\phi))=\sigma(t')(u)$ if there exists $t'> t$ such
  that, for all $t''$, $t< t''<t'$, $\sigma,t''\not\models\phi$ and
  $\sigma,t'\models\phi$; $\sigma(t)(u\atsF(\phi))=\sigma(t')(u)$ if there exists $t'\geq t$ such
  that, for all $t''$, $t< t''\leq t'$, $\sigma,t''\not\models\phi$ and
  $\sigma,t'\models\tilde{X}\phi$; otherwise,
  $\sigma(t)(u\atsF(\phi))=M(\default_{u\atsF(\phi)})$
\item
$\sigma(t)(u\atsP(\phi))=\sigma(t')(u)$ if there exists $t'< t$ such
  that, for all $t''$, $t'< t''< t$, $\sigma,t''\not\models\phi$ and
  $\sigma,t'\models\phi$;
  $\sigma(t)(u\atsP(\phi))=\sigma(t')(u)$ if there exists $t'\leq t$ such
  that, for all $t''$, $t'\leq t''< t$, $\sigma,t''\not\models\phi$ and
  $\sigma,t'\models\tilde{Y}\phi$; otherwise,
  $\sigma(t)(u\atsP(\phi))=M(\default_{u\atsP(\phi)})$
\item
$\sigma(t)(\ite(\phi,u_1,u_2))=\sigma(t)(u_1)$ if $\sigma,t\models\phi$,
  else $\sigma(t)(\ite(\phi,u_1,u_2))=\sigma(t)(u_2)$
\end{itemize}

The ``if-then-else'' operator $ite$ can be used to define
the non-strict version:
\begin{align*}
u\atF(\phi):=&\ite(\phi,u,u\atsF(\phi))\\
u\atP(\phi):=&\ite(\phi,u,u\atsP(\phi))
\end{align*}

We define the following abbreviations:
\begin{align*}
u\atsF^1(\phi):=&u\atsF(\phi)&
u\atsF^{i+1}(\phi):=&(u\atsF(\phi))\atsF^{i}(\phi)\text{ for }i\geq 1\\
u\atsP^1(\phi):=&u\atsP(\phi)&
u\atsP^{i+1}(\phi):=&(u\atsP(\phi))\atsP^{i}(\phi)\text{ for }i\geq 1
\end{align*}

\subsection{Extension with Explicit Time}

In this section, we extend the language with an explicit notion of
time that can be constrained using the event freezing functions
defined above. In particular, we introduce an explicit symbol $time$,
which represents the time elapsed from the initial state. We allow
$time$ to be compared with constant terms.

The new set of LTL-EF formulas with explicit time (XLTL-EF) is defined
as follows:
\begin{align*}
&\phi:=p(u,\ldots,u)\mid tu \bowtie cu\mid \phi\wedge\phi\mid \neg\phi\mid \phi
  \tilde{U}\phi\mid \phi \tilde{S}\phi\\
&u:=c\mid x\mid f(u,\ldots,u)\mid u\atsF(\phi) \mid u\atsP(\phi) \mid
  \ite(\phi,u,u)\\
&tu:=time \mid tu\atsF(\phi) \mid tu\atsP(\phi)\\
&cu:=c\mid f(cu,\ldots,cu)\\
&\bowtie:=<\mid >\mid \leq \mid \geq
\end{align*}

The semantics of LTL-EF is extended as follows:
$\sigma(t)(time):=v(t)$

Note that we assume that the signature $\Sigma$ contains the real
arithmetic operators and that the underlying theory contains the
theory of reals.

\subsection{Coverage of Metric Operators}

These operators can be seen as a generalization of the ECTL
operators as below:
\begin{align*}
&\rhd_I\phi:=time\atsF(\phi)-time\in I\wedge \neg\phi\tilde{U}\phi\\
&\lhd_I\phi:=time-time\atsP(\phi)\in I\wedge \neg\phi\tilde{S}\phi
\end{align*}
We can encode similarly MTL$_0^\infty$ operators. As proved in
\cite{HenzingerRS98}, in case of non-singular intervals with real
constant bounds, MTL operators can be expressed in ECTL (and thus in
XLTL-EF).

We can also express TLC properties as follows:
\begin{align*}
&\overrightarrow{C}^k_{<cu}(\phi):=time\atsF^k(\phi)-time<cu\wedge \tilde{F}^k(\phi)\\
&\overleftarrow{C}^k_{<cu}(\phi):=time-time\atsP^k(\phi)<cu\wedge \tilde{P}^k(\phi)
\end{align*}

\subsection{Sensor Example}

Consider a sensor with input $y$ and output $x$ and a Boolean flag
$correct$ that represents whether or not the value reported by the sensor is
correct. Let us specify that the output $x$ is always equal to the
last correct input value with $G(x=y\atP(correct))$. We assume that a
failure is permanent: $G(\neg correct\rightarrow G\neg
correct)$. Consider also a Boolean variable $read$ that represents the
event of reading the variable $x$. Let us say that the reading happens
periodically with period $p$: $p>0 \wedge read \wedge G(read
\rightarrow \rhd_{=p}read)$. Finally, let us say that an alarm $a$ is
true if and only if the last two read values are the same:
$G(a\leftrightarrow x\atsP(read)=x\atsP^2(read))$.

We would like to prove that, given the above scenario, every point in which
the sensor is not correct is followed within $2* p$ by an
alarm:
\begin{align*}
&(G (x=y\atP(correct)) \wedge
G(\neg correct\rightarrow G\neg correct)\wedge\\
&p>0 \wedge read \wedge G(read \rightarrow\rhd_{=p}read)\wedge
G(a\leftrightarrow x\atsP(read)=x\atsP^2(read)))\\
&\rightarrow G(\neg correct\rightarrow F_{\leq 2*p}a)
\end{align*}

In the following, we show that this kind of problems can be indeed
solved automatically with SMT-based techniques.

\section{Satisfiability Procedure}
\label{sec:sat}

\subsection{Overview of the Procedure}

The satisfiability problem for (first-order) LTL with discrete time,
and thus also for (X)LTL-EF which is an extension thereof, is in
general undecidable (see for example~\cite{GhilardiNRZ07}). However,
we can reduce it to SMT-based model checking, which although
undecidable has effective and mature tool support (also in the case of
LTL-next). We thus propose to reduce the satisfiability for (X)LTL-EF
to the one for LTL-next. The approach consists of the following steps:
1) the formula is translated into an equisatisfiable one with
discrete-time model, 2) the event freezing functions are removed
generating an equisatisfiable LTL-next formula.
Since LTL-EF is a subset of XLTL-EF, in the following we consider just
XLTL-EF formulas.

\subsection{Discretization}

Given an XLTL-EF formula with dense or super-dense time, we
create an equisatisfiable one with discrete time. We will use the
non-strict version of the temporal operators as these are those
typically supported by tools for LTL.
The discretization approach is similar to the one described
in~\cite{CAV09}. The idea is to split the time evolution is a sequence
of singular or open intervals in such a way that the trace is fine for the
input formula on such intervals. To this purpose
we introduce three variables that encode
a sequence of time intervals used to sample the value of variables:
$\iota$ is a Boolean variable that encodes if the interval is singular
or open; $\delta$ is a real variable that encodes the time elapsed
between two samplings; $\zeta$ is a real variable that accumulates
arbitrary sums of $\delta$. A constraint $\psi_\iota$ ensures that the
value of these additional variables represent a valid time interval
sequence (e.g., after an open interval there must be a singular
interval and $\zeta$ is infinitely often greater than $1$ and reset).
Another constraint $\psi_{time}$ ensures that the variable $time$ is
equal to the accumulation of $\delta$ and that the evaluation of
predicates in the formula is uniform in open intervals.

Given these extra variables, we can define the translation. Given a
formula $\phi$ over $V$, we rewrite $\phi$ into $\phi_D$ over
$V\cup\{\iota,\delta, \zeta\}$ defined as:
\begin{align*}
\phi_D:= &\mathcal{D}(\phi)\wedge \psi_\iota\wedge \psi_{time}
\end{align*}
\noindent
where $\mathcal{D}$, $\psi_\iota$, and $\psi_{time}$ are defined as
follows.

$\mathcal{D}(\phi)$ is defined recursively on the structure of $\phi$ and
rewrites the temporal operators splitting between the case in which
the current interval is singular ($\iota$) or open ($\neg \iota$).

Let us consider first $\phi_1\tilde{U}\phi_2$; intuitively, to hold in
a time point $t$, if $t$ belongs to an open interval (fine for
$\phi_1$), then $\phi_1$ must hold in $t$; similarly, if
$\phi_1\tilde{U}\phi_2$ because $\phi_2$ holds in $t'>t$ and $t'$ is
part of an open interval, also $\phi_1$ must hold in $t'$. Thus,
$\phi_1\tilde{U}\phi_2$ is translated as follows: either the current
interval is open ($\neg\iota$), $\phi_2$ holds in a future singular
interval and $\phi_1$ holds now and until that interval; or the
current interval is open, $\phi_2$ holds in a future open interval and
$\phi_1$ holds now and until that interval included; or the current
interval is singular, $\phi_2$ holds in a future singular interval and
$\phi_1$ holds (strictly) until that interval; or the current interval
is singular, $\phi_2$ holds in a future open interval and $\phi_1$
holds (strictly) until that interval included. Similarly for the past
case. Overall:
\begin{align*}
\mathcal{D}(\phi_1\tilde{U}\phi_2):=&
(\neg\iota\wedge \mathcal{D}(\phi_1)\wedge (\mathcal{D}(\phi_1)U
((\iota\wedge \mathcal{D}(\phi_2)) \vee (\mathcal{D}(\phi_1)\wedge \mathcal{D}(\phi_2)))))\vee\\
&(\iota\wedge X(\mathcal{D}(\phi_1)U ((\iota\wedge \mathcal{D}(\phi_2)) \vee (\mathcal{D}(\phi_1)\wedge \mathcal{D}(\phi_2)))))\\
\mathcal{D}(\phi_1\tilde{S}\phi_2):=&
(\neg\iota\wedge \mathcal{D}(\phi_1)\wedge \mathcal{D}(\phi_1)S
((\iota\wedge \mathcal{D}(\phi_2)) \vee (\mathcal{D}(\phi_1)\wedge \mathcal{D}(\phi_2))))\vee\\
&(\iota\wedge Y(\mathcal{D}(\phi_1)S ((\iota\wedge \mathcal{D}(\phi_2)) \vee (\mathcal{D}(\phi_1)\wedge \mathcal{D}(\phi_2)))))
\end{align*}

Let us consider now $u\atsF(\phi)$. We first define
$\mathcal{D}(u\atF(\phi))$, which is used as intermediate step to
define $\mathcal{D}(u\atsF(\phi))$. The discretization of $u\atF(\phi)$ is
translated into the value of $u$ at the first state in the future such
that $\phi$ holds in that state or the following state corresponds to
an open interval in which $\phi$ holds:
\begin{align*}
\mathcal{D}(u\atF(\phi)):=&\mathcal{D}(u)\atF(\mathcal{D}(\phi)\vee X(\neg\iota\wedge \mathcal{D}(\phi)))\\
\mathcal{D}(u\atP(\phi)):=&\mathcal{D}(u)\atP(\mathcal{D}(\phi)\vee Y(\neg\iota\wedge \mathcal{D}(\phi)))
\end{align*}

If the current interval is open and $\phi$ holds in all points of the
interval, then $u\atsF(\phi)=u=u\atF(\phi)$. Similarly, if the current
interval is singular and is followed by an open interval in which
$\phi$ holds, then $u\atsF(\phi)=u=u\atF(\phi)$. If the current interval
is open and $\phi$ does not hold in the interval, then again
$u\atsF(\phi)=u\atF(\phi)$. Finally,
if the interval is singular and is not followed by an open interval in
which $\phi$ holds, $u\atsF(\phi)$ is equal to the value of
$u\atF(\phi)$ in the next interval. The overall translation is the
following:
\begin{align*}
\mathcal{D}(u\atsF(\phi)):=&\ite(\iota\wedge X(\iota\vee\neg\mathcal{D}(\phi)),next(\mathcal{D}(u\atF(\phi))),\mathcal{D}(u\atF(\phi)))\\
\mathcal{D}(u\atsP(\phi)):=&\ite(\iota\wedge Z(\iota\vee\neg\mathcal{D}(\phi)),prev(\mathcal{D}(u\atP(\phi))),\mathcal{D}(u\atP(\phi)))
\end{align*}
\noindent where we use $prev$ for simplicity with the following
semantics: $\sigma(0)(prev(u\atP(\phi)))=\default_{u\atP(\phi)}$ and
$\sigma(i+1)(prev(u\atP(\phi)))=\sigma(i)(u\atP(\phi))$. In practice,
this is rewritten in terms of $next$ and an extra monitor variable in
the usual way.

The following completes the definition with the trivial cases:
\begin{align*}
\mathcal{D}(\phi_1\wedge\phi_2):=&\mathcal{D}(\phi_1)\wedge\mathcal{D}(\phi_2)&
\mathcal{D}(\neg\phi_1):=&\neg\mathcal{D}(\phi_1)\\
\mathcal{D}(p(u_1,\ldots,u_n)):=&p(\mathcal{D}(u_1),\ldots,\mathcal{D}(u_n))&
\mathcal{D}(tu\bowtie cu):=&\mathcal{D}(tu)\bowtie\mathcal{D}(cu)\\
\mathcal{D}(f(u_1,\ldots,u_n)):=&f(\mathcal{D}(u_1),\ldots,\mathcal{D}(u_n))&
\mathcal{D}(time):=&time\\
\mathcal{D}(c):=&c&
\mathcal{D}(v):=&v
\end{align*}

$\psi_\iota$ encodes the structure of the time model (to enforce for
example that after an open interval there must be a singular one and
that in a discrete step time does not elapse):
\begin{align*}
\psi_\iota:=&\iota \wedge G ( (\iota \wedge \delta=0 \wedge X(\iota) ) \vee
     (\iota \wedge \delta>0 \wedge X(\neg\iota) ) \vee
     (\neg \iota \wedge \delta>0 \wedge X(\iota) ) ) \wedge\\
& G ((next(\zeta)-\zeta=\delta) \vee (\zeta\geq 1\wedge next(\zeta)=0)) \wedge
  G F (\zeta\geq 1 \wedge next(\zeta)=0)
\end{align*}

Finally, $\psi_{time}$ encodes the value of $time$ and forces the
uniformity of predicates over $time$ in open intervals:
\begin{align*}
\psi_{time}:=&time=0\wedge G(next(time)-time=\delta)\wedge\\
&\bigwedge_{tu\bowtie cu\in Sub(\phi)}
G (\neg\iota\rightarrow
((\mathcal{D}(tu\leq cu) \rightarrow X \mathcal{D}(tu\leq cu))\wedge
(\mathcal{D}(tu\geq cu) \rightarrow Y \mathcal{D}(tu\geq cu))))
\end{align*}
\noindent
where $Sub(\phi)$ denotes the set of subformulas of $\phi$.

Note in particular that we require to split the time intervals in such
a way that
for every constant $cu$ occurring in a time constraint, $[cu,cu]$ is a
time interval in the sequence. Note that $cu$ can be in general a term
built with the signature symbols that are interpreted rigidly.

Written as above the discretization clearly produces a formula whose
size is exponential in the input. However, since we are interested in
equisatisfiability we can always use extra variables (one for
subformula) to obtain a linear-size formula.

We now prove that the translation is correct, i.e., that the new
formula is equisatisfiable.

\begin{theorem}
$\phi$ and $\phi_D$ are equisatisfiable.
\end{theorem}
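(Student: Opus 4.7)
The plan is to prove equisatisfiability by establishing both directions via a correspondence lemma that is proved by structural induction over subformulas of $\phi$. In each direction I will produce a witnessing trace and then show that each subformula $\psi$ of $\phi$ is satisfied in the original trace at a time point $t$ iff its discretization $\mathcal{D}(\psi)$ is satisfied in the constructed trace at the discrete position representing the time interval that contains $t$.

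For the forward direction, suppose $\sigma=\mktuple{M,\tau,\overline{\mu}}$ is a (dense or super-dense) trace with $\sigma\models \phi$. By finite variability, I can choose a sampling sequence $0=t_0<t_1<t_2<\ldots$ of time points of $\tau$ that is fine for every subformula of $\phi$ on each open interval $(t_i,t_{i+1})$, that contains every singular point where $v$ equals some $M(cu)$ for a bound $cu$ occurring in $\phi$, that includes every discrete step of $\tau$ in the super-dense case, and such that $v(t_i)$ diverges. I build a discrete trace $\sigma'$ over $\naturals$ that alternates between positions corresponding to the singular interval $[t_i,t_i]$ (with $\iota$ true and $\delta=0$, or $\delta=v(t_{i+1})-v(t_i)$ when $t_{i+1}$ is a discrete step at the same real time in the super-dense case) and the open interval $(t_i,t_{i+1})$ (with $\iota$ false and $\delta=v(t_{i+1})-v(t_i)>0$); $\zeta$ accumulates $\delta$ and is reset whenever it exceeds $1$; $time$ is set to $v(t_i)$ on singular positions and propagated according to $\delta$ on open ones. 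The values of the variables of $V$ at singular positions agree with $\overline{\mu}(t_i)$ and on open positions take any representative from $(t_i,t_{i+1})$, which is unambiguous by fineness. The structural constraints $\psi_\iota$ and $\psi_{time}$ are satisfied by construction.

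For the backward direction, from any $\sigma'\models \phi_D$ the clauses of $\psi_\iota$ guarantee an alternation of singular and open slots with divergent accumulated time, which lets me reconstruct a time interval sequence and hence a time model $\tau$ (dense whenever no consecutive singular positions share the same real time; super-dense otherwise), together with an assignment $\overline{\mu}$ that reuses the discrete values at the singular positions and extends them as constants across each open interval. The uniformity clauses in $\psi_{time}$ ensure that no time-comparison subformula $tu\bowtie cu$ flips truth value inside an open interval, so every subformula is evaluated consistently across points inside the same interval.

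The correspondence lemma is proved jointly for formulas and terms by induction on syntax. Atoms, propositional connectives, and $time$ are immediate; the cases for $\tilde U$ and $\tilde S$ follow from the four-way case analysis of interval types built into $\mathcal{D}$, which exactly matches whether the current time point and the witness time point each lie in a singular or open interval. The main obstacle is the case of the event freezing functions. Here I must show that the definition of $\mathcal{D}(u\atF(\phi))$, where the freezing is taken over $\mathcal{D}(\phi)\vee X(\neg\iota\wedge\mathcal{D}(\phi))$, selects in $\sigma'$ exactly the discrete position that represents the ``next point in the future in which $\phi$ holds'' in the sense of Section~\ref{sec:main}, that is, either the first singular position where $\phi$ is true or the first open slot on which $\phi$ holds throughout. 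The subsequent $\ite$ that defines $\mathcal{D}(u\atsF(\phi))$ must then correctly shift to the following interval only when the current point is singular and no open interval with $\phi$ immediately follows, so that the strict semantics is faithfully mimicked; the boundary cases involving $\default_{u\atsF(\phi)}$ correspond to the divergent continuation where no witness interval ever appears, which is preserved in both directions. The past operators $u\atsP$, $u\atP$ are handled symmetrically, with $prev$ encoding the shift to the previous interval and $Z$ covering the initial state. Once these term-level equivalences are established, they feed back into the predicate and $\ite$ cases of the induction, closing the proof.
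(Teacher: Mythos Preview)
Your plan is essentially the paper's: construct a witnessing trace in each direction and prove a correspondence lemma by structural induction on subformulas, handling $\tilde U/\tilde S$ by the four-way singular/open case split and the event-freezing terms via the $\atF$ intermediate together with the outer $\ite$ for strictness. One concrete point needs repair. At a singular position followed by an open one you set $\delta=0$, but $\psi_\iota$ requires $\iota\wedge X(\neg\iota)\rightarrow\delta>0$; moreover, with your assignment the subsequent open position would carry $time=v(t_i)$, i.e.\ the \emph{left endpoint} of $(t_i,t_{i+1})$, which is not a legitimate representative of the open interval and breaks the correspondence for atoms $tu\bowtie cu$ precisely when $M(cu)=v(t_i)$ (the comparison holds at the endpoint but fails throughout the open interval, so $\psi_{time}$ and the induction step for time predicates would both fail). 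The paper avoids this by taking $\delta$ at a singular position to be half the length of the following interval, and at an open position half the length of the current one, so that the accumulated sample $\sum_{h<i}\delta_h$ lands at the midpoint of each open $I_i$; with that adjustment your argument goes through unchanged.
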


\begin{proof}
Given a trace $\sigma=\mktuple{M,\tau,\overline{\mu}}$ satisfying
$\phi$ we can build a trace $\sigma_D$ with a discrete time model
satisfying $\phi_D$ as follows. Let $I_0,I_1,I_2,\ldots$ be a sequence
of time intervals such that 1) $\sigma$ is fine for all subformulas of
$\phi$ in each interval $I_i$, 2) each interval $I_i$ in the sequence
is singular or open, and 3) in case of super-dense time, for all
$i\geq 0$, there exists an integer $n_i$ such that
$\mktuple{n_i,t}\in\tau$ for all $t\in I_i$. We build an assignment to
$\iota$, $\delta$ and $\zeta$ based on such sequence. The values of
$\iota$, $\delta$, and $\zeta$ are determined by the sequence of
intervals in order to satisfy $\psi_\iota$.

Let us define the value assigned by $\sigma_D$ to $\iota,\delta,\zeta$ as
follows:
\begin{itemize}
\item
$\sigma_D(i)(\iota)=\top$ iff $I_i$ is singular;
\item
$\sigma_D(i)(\delta)=(r(I_{i+1})-l(I_{i+1}))/2$ if $I_i$ is singular\\
otherwise $\sigma_D(i)(\delta)=(r(I_{i})-l(I_i))/2$;
\item
$\sigma_D(0)(\zeta)=0$\\
$\sigma_D(i+1)(\zeta)=\sigma_D(i)(\zeta)+\sigma_D(i+1)(\delta)$ if
  $\sigma_D(i+1)(\zeta)\leq 1$\\
otherwise $\sigma_D(i+1)(\zeta)=0$
\end{itemize}

Thus, $\sigma_D\models\psi_\iota$. Notice in particular, that if
$I_i=I_{i+1}$ then $\delta=0$, if $I_i$ is singular and $I_{i+1}$ is
not then $\delta$ is equal to half of the length of $I_{i+1}$ and if
$I_i$ is not singular then $\delta$ is equal to half of the length of
$I_{i}$.

Let $\delta_i:=\sigma(i)(\delta)$ and $t_i=\sum_{0\leq h<i}\delta_h$
for all $i\geq 0$. Notice that for all $i\geq 0$, $t_i\in I_i$. We
complete the time model of $\sigma_D$ by defining $v(i):=t_i$ for all
$i\geq 0$.

Let $\textbf{t}_i=t_i$ in case $\sigma$ has a dense time and
$\textbf{t}_i=\mktuple{n_i,t_i}$ in case of super dense time. Let us
complete the definition of $\sigma_D$ by saying that for all $i\geq
0$, $\sigma_D(i)(x):=\sigma(\textbf{t}_i)(x)$.

We now prove that, for all $i\geq 0$, for all subformulas $\psi$ of
$\phi$, $\sigma,\textbf{t}_i\models
\psi$ iff $\sigma_D,i\models \mathcal{D}(\psi)$ and for all terms $u$
in $\phi$, $\sigma(\textbf{t}_i)(u)=\sigma_D(i)(\mathcal{D}(u))$.

By definition of $\atF$, $\sigma(\textbf{t}_i)(u\atF(\psi))$ is
the value of $u$ at the next point $t\geq\textbf{t}_i$ such that
$\sigma,t\models \psi\vee \tilde{X}\psi$. Since $\sigma$ is fine
for $\psi$, $t$ must belong to a singular interval $I_j$ with $j\geq i$
(so $t=\textbf{t}_j$). By inductive hypothesis,
$\sigma,\textbf{t}_j\models\psi$ iff $\sigma_D,j\models
\mathcal{D}(\psi)$ and $\sigma,\textbf{t}_{j+1}\models\psi$ iff
$\sigma_D,j+1\models \mathcal{D}(\psi)$. Thus,
$\sigma,\textbf{t}_j\models\psi\vee \tilde{X}\psi$ iff
$\sigma_D,j\models \mathcal{D}(\psi)\vee X (\neg\iota\wedge
\mathcal{D}(\psi))$.  Moreover, still by inductive hypothesis, for all
$i<h<j$, $\sigma,\textbf{t}_h\models \psi$ iff $\sigma_D,h\models
\mathcal{D}(\psi)$ and
$\sigma(\textbf{t}_j)(u)=\sigma_D(j)(\mathcal{D}(u))$. Thus,
$\sigma(\textbf{t}_i)(u\atF(\psi))=\sigma_D(i)(\mathcal{D}(u\atF(\psi)))$.

It is routine to prove the other cases and we can conclude that
$\sigma_D\models \mathcal{D}(\phi)$.

Finally, $\sigma_D\models\psi_{time}$: in fact, $\sigma_D\models
time=0\wedge G(next(time)-time=\delta)$ by definition of $\sigma_D$;
the rest of $\psi_{time}$ is trivially satisfied because $\sigma$ is
fine for $\phi$.

Vice versa, suppose that there exists $\sigma$ with discrete time such
that $\sigma\models\mathcal{D}(\phi)$. Then we can build a $\sigma_C$
with super-dense time such that $\sigma_C\models\phi$ as follows. Let
$t_i=\sum_{0\leq h<i}\delta_h$. $I_i:=[t_i,t_i]$ if
$\sigma,i\models\iota$; otherwise $I_i:=(t_{i-1},t_{i+1})$. Let
$\sigma(t)(v)=\sigma(i)(v)$ for every $t\in I_i$.

It is routine to prove that $\sigma_C\models\phi$.
\end{proof}

\subsection{Removing Event Freezing Functions}

In the following, we assume that satisfiability is restricted to
traces with discrete time and we use the non-strict version of temporal
operators. If the term $u\atF(\phi)$ occurs in a formula $\psi$, we
can obtain a formula $\mathcal{R}(\psi,u\atF(\phi))$ equisatisfiable
to $\psi$ where the term $u\atF(\phi)$ has been replaced with a fresh
variable $p_{u\atF(\phi)}$. More specifically,
\begin{align*}
\mathcal{R}(\psi,u\atF(\phi)):=&\psi[p_{u\atF(\phi)}/u\atF(\phi)]\wedge\\
  &G({F}\phi\rightarrow (\neg \phi \wedge
next(p_{u\atF(\phi)})=p_{u\atF(\phi)}) {U} (\phi \wedge
p_{u\atF(\phi)}=u)) \wedge\\
  &G({G}\neg\phi\rightarrow p_{u\atF(\phi)}=\default_{u\atF(\phi)})
\end{align*}

$\mathcal{R}(\psi,u\atF(\phi))$ is a formula on an extended set of
variables. Namely, if $\phi$ is a formula over variables $V$, then
$\mathcal{R}(\psi,u\atF(\phi))$ is a formula over
$V\cup\{p_{u\atF(\phi)}\}$, where $p_{u\atF(\phi)}$ does not occur in
$\psi$. However, the value of $p_{u\atF(\phi)}$ is uniquely determined
by a trace over $V$. In other words, given a trace $\sigma$ over $V$,
we can define a trace $\mathcal{R}(\sigma,u\atF(\phi))$ over
$V\cup\{p_{u\atF(\phi)}\}$ such that $\sigma\models\phi$ iff
$\mathcal{R}(\sigma,u\atF(\phi))\models
\mathcal{R}(\psi,u\atF(\phi))$.  $\mathcal{R}(\sigma,u\atF(\phi))$ is
simply defined as follows:
\begin{align*}
& \mathcal{R}(\sigma,u\atF(\phi))(t)(x)=\sigma(t)(x), x\in V\\
& \mathcal{R}(\sigma,u\atF(\phi))(t)(p_{u\atF(\phi)})=\sigma(t)(u\atF(\phi))
\end{align*}

\begin{theorem}
If $\sigma\models\phi$ then
$\mathcal{R}(\sigma,u\atF(\phi))\models
\mathcal{R}(\psi,u\atF(\phi))$. If $\sigma\models
\mathcal{R}(\psi,u\atF(\phi))$, then $\sigma\models\phi$.
Thus, $\psi$ and $\mathcal{R}(\psi,u\atF(\phi))$ are equisatisfiable.
\end{theorem}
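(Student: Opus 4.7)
The plan is to prove the two displayed implications separately; their conjunction yields the equisatisfiability assertion, and the construction in fact sets up a one-to-one correspondence between models of $\psi$ and those of $\mathcal{R}(\psi,u\atF(\phi))$ that agree on the original set of variables $V$. Throughout I read the first sentence of the theorem as ``if $\sigma\models\psi$\dots'', the $\phi$ there being an evident slip for the outer formula $\psi$.

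For the forward direction, I assume $\sigma\models\psi$ and put $\sigma':=\mathcal{R}(\sigma,u\atF(\phi))$. By construction $\sigma'(t)(p_{u\atF(\phi)})=\sigma(t)(u\atF(\phi))$ for every $t$, so a routine induction on the structure of $\psi$ shows that $\sigma',t\models \psi[p_{u\atF(\phi)}/u\atF(\phi)]$ iff $\sigma,t\models\psi$; this handles the first conjunct. For the second conjunct, fix $t$ with $\sigma',t\models F\phi$ and let $t^*\geq t$ be the least index with $\sigma',t^*\models\phi$ (well-defined in discrete time). By the semantics of $\atF$, the value $\sigma(t')(u\atF(\phi))$ is constantly $\sigma(t^*)(u)$ for $t\leq t'\leq t^*$, so the same holds for $\sigma'(t')(p_{u\atF(\phi)})$; this is exactly what the non-strict until clause asserts. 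The third conjunct is immediate from the default case of $\atF$.

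For the reverse direction, I assume $\sigma\models \mathcal{R}(\psi,u\atF(\phi))$. The crux is to show that at every $t$ the value $\sigma(t)(p_{u\atF(\phi)})$ coincides with the semantic value $\sigma(t)(u\atF(\phi))$ computed from the restriction of $\sigma$ to $V$. If $\sigma,t\models F\phi$, the second conjunct forces $p_{u\atF(\phi)}$ to stay constant from $t$ up to the first witness $t^*$ of $\phi$ and to equal $\sigma(t^*)(u)$ there; a short backward induction on $t^*-t$ using the $next(p_{u\atF(\phi)})=p_{u\atF(\phi)}$ clause inside the body of the until gives $\sigma(t)(p_{u\atF(\phi)})=\sigma(t^*)(u)$, matching the semantic definition. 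If instead $\sigma,t\models G\neg\phi$, the third conjunct pins $p_{u\atF(\phi)}$ to $\default_{u\atF(\phi)}$, again matching. Since $F\phi\vee G\neg\phi$ is a tautology, $p_{u\atF(\phi)}$ is pinned at every point, and then the substitution equivalence applied to the first conjunct yields $\sigma\models\psi$.

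The main obstacle is the reverse direction: one must verify that the until formula cannot be satisfied by ``wrong'' values of $p_{u\atF(\phi)}$ strewn between successive occurrences of $\phi$. The pointwise constancy $next(p_{u\atF(\phi)})=p_{u\atF(\phi)}$ inside the body of the until is precisely what propagates the correct value backwards from the first witness, and the tautology $F\phi\vee G\neg\phi$ ensures every position is constrained by one of the two global conjuncts. Once these two small checks are in place, both implications, and hence the equisatisfiability claim, follow.
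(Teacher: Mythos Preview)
Your proposal is correct and follows essentially the same strategy as the paper's own proof: in the forward direction you verify each conjunct of $\mathcal{R}(\psi,u\atF(\phi))$ from the semantics of $\atF$, and in the reverse direction you show that the two global conjuncts force $\sigma(t)(p_{u\atF(\phi)})=\sigma(t)(u\atF(\phi))$ at every point via the case split on $F\phi$ versus $G\neg\phi$, then conclude by the substitution equivalence. You are in fact a bit more explicit than the paper (spelling out the backward induction along the until and the tautology $F\phi\vee G\neg\phi$), and your observation that the first ``$\phi$'' in the statement should be ``$\psi$'' is well taken.
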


\begin{proof}
Let us assume that $\sigma\models\phi$.

Given the definition of $\mathcal{R}(\sigma,u\atF(\phi))$, the
prophecy variable $p_{u\atF(\phi)}$ is given the value of the term
$u\atF(\phi)$, and thus $\mathcal{R}(\sigma,u\atF(\phi))\models
\psi[p_{u\atF(\phi)}/u\atF(\phi)]$.

For every $t$, if $\sigma,t\models {F}(\phi)$, then there
exists $t'\geq t$ such that $\sigma,t'\models \phi$ and for all $t''$,
$t\leq t''<t'$, $\sigma,t''\not\models\phi$. Thus,
$\sigma(t')(u\atF(\phi))=\sigma(t'')(u\atF(\phi))=\sigma(t')(u)$. Thus
$\sigma,t\models G({F}(\phi)\rightarrow(\neg \phi \wedge
next(p_{u\atF(\phi)})=p_{u\atF(\phi)}) {U} (\phi \wedge
p_{u\atF(\phi)}=u))$.

For every $t$, if $\sigma,t\models {G}(\neg\phi)$, then
$\sigma(t)(u\atF(\phi))=\sigma(t)(\default_{u\atF(\phi)})$. Thus,
$\sigma,t\models G({G}(\neg\phi)\rightarrow
p_{u\atF(\phi)}=\default_{u\atF(\phi)})$.

Vice versa, let us assume that $\sigma\models
\mathcal{R}(\psi,u\atF(\phi))$.
It is sufficient to prove that
$\sigma(t)(p_{u\atF(\phi)})=\sigma(t)(u\atF(\phi))$.

Let us assume that there exists $t'\geq t$ such that, for all $t''$,
$t\leq t''<t'$, $\sigma,t''\not\models\phi$ and
$\sigma,t'\models\phi$; thus,
$\sigma(t)(u\atF(\phi))=\sigma(t')(u)$. Since $\sigma,t\models
{F}\phi\rightarrow (\neg \phi \wedge
next(p_{u\atF(\phi)})=p_{u\atF(\phi)}) {U} (\phi \wedge
p_{u\atF(\phi)}=u)$, then $\sigma(t)(p_{u\atF(\phi)})=\sigma(t')(u)$
as well.

If such $t'$ does not exists, then
$\sigma(t)(u\atF(\phi))=\sigma(t)(\default_{u\atF(\phi)})$. Also,
$\sigma,t\not\models F(\phi)$. Since $\sigma,t\models
{G}\neg\phi\rightarrow p_{u\atF(\phi)}=\default_{u\atF(\phi)}$, then
$\sigma(t)(p_{u\atF(\phi)})=\sigma(t)(\default_{u\atF(\phi)})$  as well.
This concludes the proof.
\end{proof}

We similarly remove past event freezing operator $\atP$ with the
following rule:
\begin{align*}
\mathcal{R}(\psi,u\atP(\phi)):=&\psi[p_{u\atP(\phi)}/u\atP(\phi)]\wedge\\
  &G({P}\phi\rightarrow (\neg \phi \wedge
Z(next(p_{u\atP(\phi)})=p_{u\atP(\phi)})) {S} (\phi \wedge
p_{u\atP(\phi)}=u)) \wedge\\
  &G({H}\neg\phi\rightarrow p_{u\atP(\phi)}=\default_{u\atP(\phi)})
\end{align*}

\section{Experimental Evaluation}
\label{sec:eval}

The satisfiability procedure presented in the previous section has
been implemented in a simple prototype written in C that takes in
input a XLTL-EF formula and performs the described
transformations. Then, we use nuXmv~\cite{CavadaCDGMMMRT14} to
check LTL satisfiability modulo theories, in
particular we use the algorithm that combines IC3IA~\cite{ic3ia} with
k-liveness~\cite{kliveness}. Actually, we create an universal model and we
apply model checking to check that the formula is valid. We checked
the validity of different formulas using a machine with Intel(R)
Core(TM) i7-3720QM CPU at 2.60GHz with 4GB of memory. All results are
available at \url{https://es.fbk.eu/people/tonetta/papers/gandalf17/}.

In Table~\ref{tab-eval}, we report the time needed to solve some
example formulas. The sensor example described above was proved by
nuXmv in 16s. This result is promising considering that we do not
implement any optimization neither in the translation nor in the
engine. The reported time includes the time for translation,
verification, and counterexample generation in case of not valid
formulas.

As proof of concept, we also verify the validity of some
MTL$_0^\infty$ (e.g. $( G (a \rightarrow F_{\leq p} b) \wedge G (b
\rightarrow F_{\leq p}c) ) \rightarrow G (a \rightarrow F_{\leq 2* p}
c)$) and ECTL (e.g. $(\rhd_{=q}\rhd_{=p} b) \rightarrow (\rhd_{=p+q}b
\vee \rhd_{\leq q} b)$) formulas, also including parameters. To the
best of our knowledge, this is the first tool that is able to
automatically prove the validity of this kind of formulas.

\begin{table}[t]
\begin{center}
\begin{tabular}{|l|c|c|}
\hline
Formula & Valid & Time in sec.\\
\hline
$(G (x=y\atP(correct)) \wedge
G(\neg correct\rightarrow G\neg correct)\wedge$ & &\\
$p>0 \wedge read \wedge G(read \rightarrow\rhd_{=p}read)\wedge
G(a\leftrightarrow x\atsP(read)=x\atsP^2(read)))$ & &\\
$\rightarrow G(\neg correct\rightarrow F_{\leq 2*p}a)$
 & Yes & 16\\
\hline
$G ( b \rightarrow (x \atF(b)=x) )$ & Yes & 0\\
\hline
$G ( \tilde{X}(b) \rightarrow (x \atF(b)=x) )$ & Yes & 0\\
\hline
$F b \rightarrow (\neg b U (b \vee (\tilde{X} b)))$ & Yes & 0\\
\hline
$( G (a \rightarrow F_{\leq 1} b) \wedge G (b \rightarrow F_{\leq 1}c)
) \rightarrow G (a \rightarrow F_{\leq 2} c)$ & Yes & 5\\
\hline
$( G (a \rightarrow F_{\leq p} b) \wedge G (b \rightarrow F_{\leq p}c)
) \rightarrow G (a \rightarrow F_{\leq 2* p} c)$ & Yes &
8\\
\hline
$(\rhd_{=q}\rhd_{=p} b) \rightarrow (\rhd_{=p+q}b \vee \rhd_{\leq q} b)$ & Yes & 5 \\
\hline
$F b \rightarrow (\neg b U b)$ & No & 0 \\
\hline
$\neg (x=y \wedge F_{\geq 3} x>y)$ & No & 0\\
\hline
$( G (a \rightarrow F_{\leq 3} b) \wedge G (b \rightarrow F_{\leq 3}c)
) \rightarrow G (a \rightarrow F_{\leq 3} c)$ & No & 2\\
\hline
$(\rhd_{=p}\rhd_{=p} b) \rightarrow (\rhd_{=2*p}b)$ & No & 4\\
\hline
\end{tabular}
\end{center}
\caption{Some examples and their verification results}
\label{tab-eval}
\end{table}

\section{Conclusions and Future Work}
\label{sec:conc}

In this paper, we considered an extension of first-order linear-time
temporal logic with two new event freezing functional symbols, which
represent the value of a term at the next state in the
future or last state in the past in which a formula holds. We defined
the semantics in different time models considering discrete time with
real timestamps, dense and super-dense (weakly-monotonic) time. We
precisely characterized what we mean for ``next point in the future in
which $\phi$ holds'' so that, assuming finite variability, such point
always exists also in the dense time setting. Using an explicit
variable that represents time, standard metric operators can be encoded
in the new logic. We provided a reduction to equisatisfiable
discrete-time formulas without event freezing functions and we solve
the satisfiability of the latter by SMT-based model checking. A
prototype implementation of the technique shows that the approach can
analyze interesting properties in an automated way despite the
expressiveness of the logic.

The directions for future works are manifold. We want to integrate
this techniques in mature tools such as nuXmv~\cite{CavadaCDGMMMRT14} and in
OCRA~\cite{CDT13} for contract-based reasoning; we want to extend it
to encompass variables with continuous function evolution and
constraints on their derivatives as in HRELTL~\cite{CAV09}
(HyCOMP~\cite{hycomp} is actually already supporting
$time\_until(\phi)$ and $time\_since(\phi)$, which are a
restricted version of $time\atF(\phi)$ and $time\atP(\phi)$, where
$\phi$ must represent a discrete change); finally, we want to apply
the new logic in
industrial use cases within the CITADEL project
(\url{http://citadel-project.org/}) to specify complex properties of
monitoring components.

\bibliographystyle{eptcs}
\bibliography{main}

\end{document}